%
%
%
%

\documentclass[runningheads,a4paper]{llncs}
\usepackage{algpseudocode}
\usepackage{algorithm}
\usepackage{amssymb}
\setcounter{tocdepth}{3}
\usepackage{graphicx}
\usepackage{caption}
\usepackage{subcaption}
\usepackage{tcolorbox}
\usepackage{fancyvrb}
\usepackage{tcolorbox}

\usepackage{amsmath}
\usepackage[plainpages=false]{hyperref}
\usepackage{algpseudocode}
\usepackage{algorithm}
\usepackage{xspace}  
\usepackage{braket}
\usepackage[utf8]{inputenc}
\usepackage{csquotes}
\usepackage[inline]{enumitem}
\usepackage{mathtools}
\usepackage{todonotes}
\usepackage{xargs}
\usepackage{nicefrac}

\usepackage{tikz}
\usetikzlibrary{decorations.pathreplacing,calc}

\usepackage{url}
\urldef{\mailsa}\path|{alfred.hofmann, ursula.barth, ingrid.haas, frank.holzwarth,|
\urldef{\mailsb}\path|anna.kramer, leonie.kunz, christine.reiss, nicole.sator,|
\urldef{\mailsc}\path|erika.siebert-cole, peter.strasser, lncs}@springer.com|    
\newcommand{\keywords}[1]{\par\addvspace\baselineskip
\noindent\keywordname\enspace\ignorespaces#1}

\begin{document}

\mainmatter  

\title{Online Connected Dominating Set Leasing\thanks{This work was partially supported the Federal Ministry of Education and Research (BMBF) as part of the project `Resilience by Spontaneous Volunteers Networks for Coping with Emergencies and Disaster' (RESIBES), (grant no 13N13955 to 13N13957).}}

\titlerunning{Online Connected Dominating Set Leasing}

%
%
\author{Christine Markarian}
\authorrunning{Christine Markarian}

\urldef{\mailsa}\path|chrissm@mail.uni-paderborn.de|    
\institute{
	Heinz Nixdorf Institute \& Computer Science Department,\\
	Paderborn University, F\"urstenallee 11, 33102 Paderborn, Germany\\
	\mailsa}
\toctitle{Lecture Notes in Computer Science}
\tocauthor{Authors' Instructions}
\maketitle

\begin{abstract}

We introduce the \emph{Online Connected Dominating Set Leasing} problem (OCDSL) in which we are given an undirected connected graph $G = (V, E)$, a set $\mathcal{L}$ of lease types each characterized by a duration and cost, and a sequence of subsets of $V$ arriving over time. A node can be leased using lease type $l$ for cost $c_l$ and remains active for time $d_l$. The adversary gives in each step $t$ a subset of nodes that need to be dominated by a connected subgraph consisting of nodes active at time $t$. The goal is to minimize the total leasing costs. OCDSL contains the \emph{Parking Permit Problem}~\cite{PPP} as a special subcase and generalizes the classical offline \emph{Connected Dominating Set} problem~\cite{Guha1998}. It has an $\Omega(\log ^2 n + \log |\mathcal{L}|)$ randomized lower bound resulting from lower bounds for the \emph{Parking Permit Problem} and the \emph{Online Set Cover} problem~\cite{Alon:2003:OSC:780542.780558,Korman}, where $|\mathcal{L}|$ is the number of available lease types and $n$ is the number of nodes in the input graph. We give a randomized $\mathcal{O}(\log ^2 n + \log |\mathcal{L}| \log n)$-competitive algorithm for OCDSL. We also give a deterministic algorithm for a variant of OCDSL in which the dominating subgraph need not be connected, the \emph{Online Dominating Set Leasing} problem. The latter is based on a simple primal-dual approach and has an $\mathcal{O}(|\mathcal{L}| \cdot \Delta)$-competitive ratio, where $\Delta$ is the maximum degree of the input graph.

\end{abstract}
\keywords{Online Algorithms, Competitive Analysis, Leasing, Steiner Forest, Connected Dominating Set, Parking Permit Problem}

\section{Introduction}

Dominating Set problems, where the goal is to find a minimum subgraph of a given (undirected) graph such that each node is either in the subgraph or has an adjacent node in it, form a fundamental class of optimization problems that have received significant attention in the last decades. The \emph{Connected Dominating Set} problem (CDS) - which asks for a minimum such subgraph that is connected (a connected dominating set) - is one of the most well-studied problems in this class with a wide range of applications in wireless networks~\cite{Du2013}. It is known to be $\mathcal{NP}$-complete even in planar graphs~\cite{GareyJ79} and admits an $\mathcal{O}(\ln \Delta)$-approximation for general graphs, where $\Delta$ is the maximum node degree of the input graph~\cite{Guha1998}. The latter is the best possible unless $\mathcal{NP} \subset DTIME(n^{\log n \log n})$. Connected dominating sets have been widely employed in wireless ad-hoc networks by serving as virtual backbones for routing, broadcasting, and connectivity management~\cite{doi:10.1177/155014771703201}. The following two assumptions are often made. (1) Once a node is assigned as a member of the virtual backbone (a dominating node), it can serve forever without incurring further costs in the future. It is natural though to consider costs  - such as maintenance costs - that appear over time. (2) In a classical connected dominating set problem, all nodes are required to be served (dominated) at once. Nevertheless, this need not be true in many scenarios (e.g., arrival of new clients in a client-server scenario). In this paper, we address these two concerns by providing \emph{leasing} options to dominating nodes such that a node can be leased for different durations and costs - once the lease for a node expires, it needs to be paid for again. Moreover, to capture the dynamic nature of wireless ad-hoc networks, we consider CDS in an \emph{online} setting. This is formulated as the \emph{Online Connected Dominating Set Leasing} problem (OCDSL), defined as follows. 

\begin{definition} (OCDSL) Given an undirected connected graph $G = (V, E)$, a set $\mathcal{L}$ of lease types each characterized by a duration and cost, and a sequence of subsets of $V$ arriving over time. A node leased using lease type $l$ incurs cost $c_l$ and remains active for time $d_l$. A node is said to be dominated by a subset $S$ of $V$ if it is either in $S$ or has an adjacent node in $S$. In each step $t$, the adversary gives a subset of nodes of $V$ that need to be dominated by a connected subgraph consisting of nodes active at time $t$. The goal is to minimize the total leasing costs. 
\end{definition}

OCDSL contains the \emph{Parking Permit} problem~\cite{PPP} (PP) as a special subcase. In his seminal work~\cite{PPP}, Meyerson has introduced the first theoretical leasing model with this problem, defined as follows. Each day, depending on the weather, we have to either use the car (if it is rainy) or walk (if it is sunny). In the former case, we must have a valid parking permit, which we choose among a set $\mathcal{L}$ of different types of permits (leases), each having a different duration and price. At any time, lease prices respect \emph{economy of scale} such that a longer lease costs less per unit time. The goal is to buy a set of leases in order to cover all rainy days while minimizing the total cost of purchases and without using weather forecasts. There have been a series of classical complex optimization problems~\cite{InfraLeasingProb} studied with the leasing notion, such as the \emph{Set Cover} problem~\cite{JOCO}, the \emph{Facility Location} problem~\cite{OffOnFacilityLeasing}, and the \emph{Steiner Forest} problem~\cite{WADSMarcin}.

There have been a vast number of works on CDS and its variants. For general undirected graphs, Guha \emph{et al.}~\cite{Guha1998} have proposed a greedy approach that yields an $\mathcal{O}(\ln \Delta)$-approximation for CDS, where $\Delta$ is the maximum degree of the input graph. For general directed graphs, Li \emph{et al.}~\cite{Li:2009:CSC:1502808.1502872} have given an $\mathcal{O}(\ln n)$-approximation for the \emph{Strongly Connected Dominating Absorbent Set} problem (SCDAS) - which asks for a minimum subset $S$ of nodes such that each node has both an in- and out- neighbor in $S$ and the subgraph induced by $S$ is strongly connected, where $n$ is the number of nodes in the input graph. Connected dominating sets have been intensively studied as subproblems in wireless ad-hoc networks, which are commonly modeled as unit disk graphs. Lichtenstein~\cite{Lichte} has
shown that CDS is $\mathcal{NP}$-complete for unit disk graphs. Constant-factor approximation algorithms for the latter were given by Marathe \emph{et al.}~\cite{Marathe}. 
\label{sec:relatedwork}
\label{sec:relatedwork}

\section{Preliminaries}
\label{sec:preliminaries}

We assume the following lease configuration for the nodes.  

\begin{definition} (Lease Configuration)
\label{leaseconfig}
A node can be leased with lease type $l$ only at times multiple of $d_l$, that is, at times $t$ with $t \equiv 0 \mod d_l$. Moreover, all lease lengths are power of two.
\end{definition}
This implies that at any time $t$ and any node $u$ there are exactly $|\mathcal{L}|$ lease types corresponding to $u$. This configuration has been similarly defined by Meyerson for the \emph{Parking Permit Problem}~\cite{PPP}. Meyerson has shown that by assuming this configuration (called the \emph{Interval Model}) one loses a constant factor in the competitive ratio. Similar arguments were given for all generalizations of the \emph{Parking Permit Problem}~\cite{JOCO,WADSMarcin,OffOnFacilityLeasing}. 

\paragraph {\bf Lower bound.} There is an $\Omega(\log ^2 n + \log |\mathcal{L}|)$ randomized lower bound for OCDSL resulting from two lower bounds, the first of which comes from a lower bound for the \emph{Parking Permit Problem}~\cite{PPP}, a special subcase of OCDSL. Meyerson has shown a randomized $\Omega(\log (|\mathcal{L}|)$ lower bound for the \emph{Parking Permit Problem}. The second comes from a lower bound for the \emph{Online Set Cover} problem introduced by Alon \emph{et al.}~\cite{Alon:2003:OSC:780542.780558}. Korman~\cite{Korman} has shown a randomized $\Omega(\log n \log m)$ lower bound for the \emph{Online Set Cover} problem - where $n$ is the number of elements and $m$ is the number of subsets. The offline \emph{Set Cover} problem reduces to the offline \emph{Connected Dominating Set} problem~\cite{Guha1998}. It is easy to see that, by similar argumentation, a reduction between the corresponding online counterparts can be made, where the online variant of CDS can be seen as a special case of OCDSL in which there is one lease type of infinite duration. Hence, the lower bound for OCDSL follows. 

\begin{figure}
\centering
\begin{subfigure}{.45\textwidth}
  \centering
  \includegraphics[width=1\linewidth]{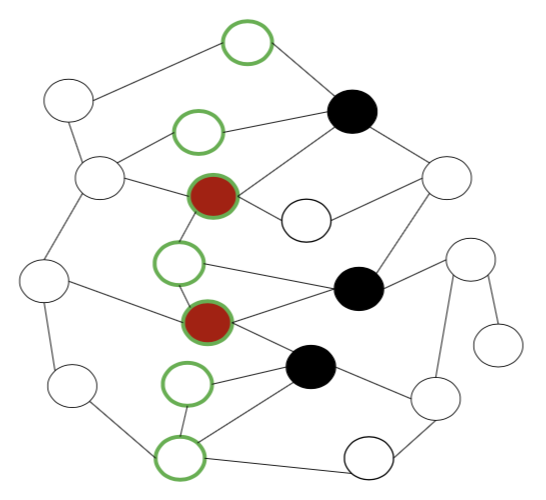}
  \caption{Phase 1}
  \label{fig:phase1}
\end{subfigure}%
\begin{subfigure}{.45\textwidth}
  \centering
  \includegraphics[width=1\linewidth]{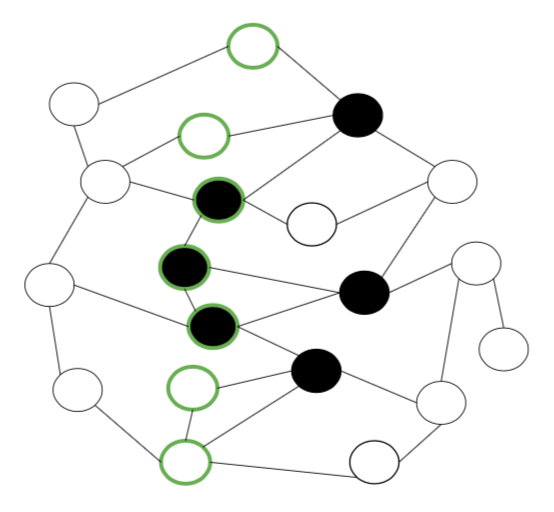}
  \caption{Phase 2}
  \label{fig:phase2}
\end{subfigure}
\caption{The two phases of the algorithm on a time step $t$. Phase 1: Nodes with green border form the set $D_t$, black nodes form the set $S_t$, and red nodes form the set $R_t$. Phase 2: Black nodes form a connected dominating set for the set $D_t$.}
\label{fig:phases}
\end{figure}

\section{Online Algorithm}
\label{sec:onlinealgorithm}

In this section we present a randomized algorithm for OCDSL. 

The algorithm is given an undirected connected graph $G = (V, E)$, a set $\mathcal{L}$ of lease types each characterized by a duration and cost, and a sequence of subsets of nodes arriving over time. A node $u$ leased with type $l$ at time $t$ is denoted as a triplet $(u, l, t)$. The set of all triplets is denoted as $\mathcal{V}$. A triplet whose lease intersects time $t$ is called a \emph{$t$-triplet}. A triplet is \emph{active} when its cost has been paid for. The structure of the algorithm is similar to the classical structure used in most offline algorithms for CDS - first construct a dominating set and then connect it with a Steiner tree. Suppose the adversary gives a subset $D_t$ of nodes at time $t$. We denote by $\mathcal{D}$ the collection of all these subsets. The algorithm considers only $t$-triplets and runs in two phases. In the first phase it purchases leases of a subset $S_t$ of nodes that dominate $D_t$, that is, each node in $D_t$ is either in $S_t$ or has a neighbor in it. In the second phase it makes sure that the subset induced by $S_t$ is connected by active triplets, by purchasing additional $t$-triplets when necessary (See Fig \ref{fig:phases} for an illustration). The first phase of the algorithm is based on \emph{randomized rounding}, an approach typically used in online algorithms. A fractional solution in which a fraction is assigned to each triplet is first constructed. The latter is then rounded to yield an integral solution. The second phase is based on running Meyerson's $\mathcal{O}(\log n \log (|\mathcal{L}|))$-competitive algorithm for the \emph{edge-weighted Online Steiner Forest Leasing} problem. The algorithm assigns to each triplet $(v, l, t)$ a weight $w_{(v, l, t)}$ initially set to zero and non-decreasing throughout the algorithm. A random variable $\mu$ is chosen as the minimum among  $2\left\lceil \log(n+1) \right\rceil$ independent random variables, distributed uniformly in the interval $\left[0, 1\right]$. We call the set of triplets that can dominate an arriving node $u$ the \emph{dominators} of $u$ and denote it as $W_u$. Given a subset $D_t$ of nodes, the algorithm runs the two phases depicted in Algorithm \ref{alg:1}.

\begin{algorithm}
  \caption{}
  \label{alg:1}
  \begin{algorithmic}
  \State {\bf Phase 1} (Fig \ref{fig:phase1}) \
  \State i) For each $u \in D_t$, if $u$ does not have an active dominator,
     \State \hspace{6pt} a) While the total weight of $u$'s dominators is less than 1,
     \State \hspace{17pt} Set the weight $w_{(i,l,t')}$ of each dominator to:  $$w_{(i,l,t')} \cdot (1 + 1/c_{l}) + \frac{1}{|W_{u}| \cdot |\mathcal{L}| \cdot c_{l}}$$\  
    \State \hspace{6pt} b) Purchase $(i,l,t')$ if $w_{(i,1,t')} > \mu$.\
    \State \hspace{6pt} c) If $u$ does not have an active dominator, choose a dominator of smallest lease type and purchase it.\
    \State ii) Assign to each $u \in D_t$ an active dominator. Let $S_t$ denote the set of these dominators. For each triplet $(i,l,t') \in S_t$, purchase a triplet of cheapest lease type corresponding to one of the nodes in $D_t$ that is dominated by $(i,l,t')$ - call it a \emph{representative}. To choose these nodes, use a greedy approach - choose a representative $u \in D_t$ with more adjacent dominators in $S_t$ first. \\
    \hrulefill
   \State {\bf Phase 2} (Fig \ref{fig:phase2}) \
   \State i) Choose arbitrarily one of the representatives from Step 1 - call it the \emph{root}. Let $R_t$ be the set of representatives that are not connected to the root via a path consisting of active $t$-triplets. \
   \State ii) Assign a unit-weight to all edges and run the algorithm for edge-weighted OSFL with the set $R_t$ as terminals for time $t$. The algorithm outputs a set of leased edges. \
    \State iii) For each edge $(v, w)$ leased (at the current step $t$) with type $l$ and starting time $t'$, purchase the corresponding triplets for the nodes $v$ and $w$ with lease type $l$ and starting time $t'$.
    \end{algorithmic}
\end{algorithm}

\paragraph{Correctness.} For a given subset $D_t$ of nodes, the algorithm guarantees in Step c that each node in $D_t$ has at least one dominator active at time $t$. Each is then assigned to arbitrary one active dominator. To connect the set of these dominators at time $t$, the algorithm buys a representative node for each dominator from $D_t$ (adjacent to it) and then connects the representative nodes with additional $t$-triplets forming a Steiner tree. Hence the dominators induce a connected subgraph that consist of active $t$-triplets.

\section{Analysis}
\label{sec:analysis}
In this section we show that Algorithm \ref{alg:1} has an $\mathcal{O}(\log ^2 n + \log |\mathcal{L}| \log n)$-competitive ratio, where $n$ is the number of nodes in the input graph and $|\mathcal{L}|$ is the number of lease types. 

Let $Opt$ be the cost of an optimal offline solution and let $\Delta$ be the maximum degree of the input graph. Clearly, we have that $\Delta \leq n$. Since the algorithm is randomized, we measure its cost in expectation. We analyze each phase separately. We denote by $C_1$ and $C_2$ the algorithm's cost in Phase 1 and Phase 2, respectively.

\paragraph{\bf Phase 1.} For a subset $D_t$ of nodes arriving at time $t$ the algorithm constructs a dominating set consisting of nodes active at time $t$. To this end, it constructs a fractional solution (in step $a$), which is rounded to an integral solution (in step $b$). To guarantee that all arriving nodes are dominated, the algorithm purchases further leases if necessary (in step $c$). Moreover, the algorithm buys additional leases called representatives (in step $ii$), which are used in Phase 2 to connect active dominators of Step 1. The following lemma bounds the cost of the fractional solution in terms of $Opt$.

\begin{lemma}
\label{lem1}
The cost of the fractional solution constructed by Algorithm \ref{alg:1} in Phase 1 is at most $\mathcal{O}(\log (|\mathcal{L}| \cdot (\Delta +1)))$ times $Opt$.
\end{lemma}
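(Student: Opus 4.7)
The plan is a multiplicative-weights / potential-function analysis in the spirit of the online set cover algorithm of Alon et al.~\cite{Alon:2003:OSC:780542.780558}, adapted to the lease-cost structure. Fix an optimal fractional LP solution $x^{*}$ of Phase~1's covering LP, so that $\sum_{(i,l,t') \in W_u} x^{*}_{(i,l,t')} \geq 1$ for every $u$ that ever arrives, and $\sum_{(i,l,t')} c_l\cdot x^{*}_{(i,l,t')} \leq Opt$ (this holds because the LP is a relaxation). The goal is to show that each execution of the inner while-loop raises the fractional cost by at most a constant while raising an auxiliary potential by at least a constant.

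First I would bound the fractional cost added by one update. Writing out the increment in $\sum c_l w_{(i,l,t')}$ over the dominators of $u$ gives
\[
\sum_{(i,l,t') \in W_u} c_l \cdot \Bigl[\tfrac{w_{(i,l,t')}}{c_l} + \tfrac{1}{|W_u||\mathcal{L}|c_l}\Bigr] \;=\; \sum_{(i,l,t') \in W_u} w_{(i,l,t')} \;+\; \tfrac{1}{|\mathcal{L}|} \;<\; 1 + \tfrac{1}{|\mathcal{L}|},
\]
since the loop only runs while the current total weight at $u$ is strictly below $1$. So every update costs at most $2$.

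Next I would introduce the potential
\[
\Phi \;:=\; \sum_{(i,l,t')} x^{*}_{(i,l,t')} \cdot c_l \cdot \ln\!\bigl(1 + (\Delta+1)|\mathcal{L}|\cdot w_{(i,l,t')}\bigr).
\]
The key algebraic observation is that for $a = |W_u||\mathcal{L}|$ the update rule exactly replaces $1+aw$ by $(1+aw)(1+1/c_l)$; substituting the uniform upper bound $a = (\Delta+1)|\mathcal{L}| \ge |W_u||\mathcal{L}|$ can only enlarge the new value of $1+aw$, yielding $\ln(1+aw') - \ln(1+aw) \geq \ln(1+1/c_l)$ for every triplet in $W_u$. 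Summing against $x^{*}_{(i,l,t')}c_l$ over $W_u$, and using the elementary inequality $c\ln(1+1/c)\geq \ln 2$ for $c\ge 1$ together with $\sum_{W_u} x^{*}_{(i,l,t')} \geq 1$, I would conclude that $\Delta \Phi \geq \ln 2$ per iteration.

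Finally I would bound $\Phi_{\text{final}}$. The while-condition ensures that just before any update $w_{(i,l,t')} < 1$, so after the update $w_{(i,l,t')} \leq (1+1/c_l) + 1/(|W_u||\mathcal{L}|c_l) \leq 3$, uniformly in time. Hence
\[
\Phi_{\text{final}} \;\leq\; \sum_{(i,l,t')} c_l\, x^{*}_{(i,l,t')} \cdot \ln\!\bigl(1 + 3(\Delta+1)|\mathcal{L}|\bigr) \;=\; O\!\bigl(\log(|\mathcal{L}|(\Delta+1))\bigr)\cdot Opt.
\]
Combining, the total number of updates is at most $\Phi_{\text{final}}/\ln 2$, and each adds at most a constant to the fractional cost, giving the claimed $\mathcal{O}(\log(|\mathcal{L}|(\Delta+1)))\cdot Opt$ bound. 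The step I expect to need the most care is the mismatch between the $u$-dependent factor $|W_u|$ in the weight increment and a single global potential function; replacing $|W_u|$ by the uniform bound $\Delta+1$ is precisely what forces the $\Delta+1$ inside the logarithm and matches the statement of the lemma.
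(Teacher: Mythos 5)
Your proof is correct, and it takes a genuinely different route from the paper's. The paper argues directly against the \emph{integral} optimum: every weight-augmentation step for an uncovered node $u$ multiplies the weight of some optimal triplet $(j,l',t')\in W_u$ by $(1+1/c_{l'})$, that weight starts at $\Omega\bigl(1/(|W_u||\mathcal{L}|c_{l'})\bigr)$ and is capped at $1$, so each optimal triplet can absorb at most $O(c_{l'}\log(|W_u||\mathcal{L}|))$ augmentations, each costing at most $2$; summing over the optimal triplets gives the bound. You instead run the Alon et al.\ potential-function argument against an optimal \emph{fractional} solution $x^*$, showing each augmentation costs at most $2$ while raising $\Phi=\sum x^* c_l\ln(1+(\Delta+1)|\mathcal{L}|w)$ by at least $\ln 2$, and capping $\Phi$ by $O(\log(|\mathcal{L}|(\Delta+1)))\cdot Opt$. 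Your version buys cleaner bookkeeping: it transparently handles the fact that distinct arriving nodes may share the same optimal dominator (a point the paper's ``adding up over all these triplets'' glosses over), and it compares against the LP optimum, which is the form needed if one later wants an integrality-gap-style statement; the paper's version is more elementary and avoids introducing the LP. Two small points in your write-up deserve an explicit line: (i) the inequality $c_l\ln(1+1/c_l)\ge\ln 2$ requires the normalization $c_l\ge 1$ (the paper's own counting argument implicitly needs the same); and (ii) your justification ``can only enlarge the new value of $1+aw$'' for passing from $a=|W_u||\mathcal{L}|$ to $a'=(\Delta+1)|\mathcal{L}|$ is not quite the right reason --- the correct one-line check is $1+a'w' = 1+a'w(1+1/c_l)+\tfrac{a'}{a c_l}\ \ge\ (1+a'w)(1+1/c_l)$ since $a'\ge a$, which is exactly the ratio bound you need.
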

\begin{proof}
We need to bound $\sum_{(i,l,t') \in \mathcal{V}} c_l \cdot w_{(i,l,t')}$. Clearly, the algorithm increases the weights only when there is a node that is not dominated. Let $u$ be such a node. The algorithm increases the weight corresponding to each of $u$'s $\left| W_u \right|$ dominators such that the fractional cost added by each dominator $(i, l, t')$ is $\left(c_l \cdot (\frac{w_{(i,l,t')}}{c_l} + \frac{1}{\left| W_u  \right| \cdot c_l} )\right)$. Thus, the fractional cost added by all dominators is: 
\begin{equation}
\label{fracinc1}
\sum_{(i,l,t') \in W_u} c_l \cdot \left( \frac{w_{(i,l,t')}}{c_l} + \frac{1}{\left| W_u  \right| \cdot c_l} \right) \leq 2
\end{equation}

The above inequality holds since $\sum_{(i,l,t') \in W_u} w_{(i,l,t')} \leq 1$ before any weight increase. An optimal solution must contain at least one triplet $(j,l',t') \in W_u$. The weight of this triplet becomes at least 1 after at most $(c_{l'} \cdot \log{\left| W_u \right|})$ weight increases and hence no further weight increases can be made since the sum of weights of the dominators must not exceed 1. This means that the maximum number of weight increases is $(c_{l'} \cdot \log{\left| W_u \right|})$ and by inequality \ref{fracinc1}, we have that each weight increase costs at most 2. Thus, we conclude that:  

\begin{equation}
\label{fracinc2}
\sum_{(i,l,t') \in W_u} c_l \cdot w_{(i,l,t')} \leq 2 \cdot c_{l'} \cdot \log |W_u|
\end{equation}

This argument holds for all triplets in the optimal solution. Adding up over all these triplets leads to the following.

\begin{equation}
\label{fffff}
\sum_{(i,l,t') \in \mathcal{V}} c_l \cdot w_{(i,l,t')} \leq 2 \cdot Opt \cdot \log |W_u|
\end{equation}

\end{proof}

Next, we bound the cost of the integral solution in terms of the cost of the fractional solution.

\begin{lemma}
\label{lem2}
The cost of the integral solution constructed by Algorithm \ref{alg:1} in Phase 1 is at most $\mathcal{O}(\log n)$ times the cost of the fractional solution constructed by Algorithm \ref{alg:1} in Phase 1.   
\end{lemma}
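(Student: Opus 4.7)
The plan is to decompose the total cost incurred in Phase 1 into three contributions and to show that each is bounded in expectation by $\mathcal{O}(\log n)$ times the fractional cost $\Phi_{\text{frac}} = \sum_{(i,l,t') \in \mathcal{V}} c_l\, w_{(i,l,t')}$: the purchases made by the randomized rounding in step (b), the safety-net purchases in step (c), and the representative triplets purchased in step (ii).

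For step (b), I would invoke the standard randomized-rounding bound. Since $\mu$ is the minimum of $k = 2 \lceil \log(n+1) \rceil$ independent uniform $[0,1]$ random variables, Bernoulli's inequality gives $\Pr[w_{(i,l,t')} > \mu] = 1 - (1 - w_{(i,l,t')})^k \leq k \cdot w_{(i,l,t')}$. Multiplying by $c_l$ and summing over all triplets,
\begin{equation*}
\mathbb{E}[\text{cost of step (b)}] \leq k \sum_{(i,l,t') \in \mathcal{V}} c_l\, w_{(i,l,t')} = \mathcal{O}(\log n) \cdot \Phi_{\text{frac}}.
\end{equation*}

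For step (c), the backup purchase is triggered for $u \in D_t$ only when $\mu > \max_{(i,l,t') \in W_u} w_{(i,l,t')}$. Using the loop-termination invariant $\sum_{(i,l,t') \in W_u} w_{(i,l,t')} \geq 1$, the largest weight is at least $1/|W_u|$, so the failure probability is at most $(1 - 1/|W_u|)^k$; moreover, because $\sum w \geq 1$, the cost $c_{\min}^u$ of the cheapest dominator of $u$ is bounded by $\sum_{(i,l,t') \in W_u} c_l\, w_{(i,l,t')}$. A charging argument that assigns the expected backup cost at each arrival to the fractional cost attributable to $u$'s dominators then gives an $\mathcal{O}(\log n)$ bound in expectation. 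For step (ii), each representative is a triplet of cheapest lease type paired with a dominator already in $S_t$; its cost is dominated by that of the paired dominator, so step (ii) contributes only a constant factor on top of steps (b) and (c).

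The main obstacle is the analysis of step (c): the direct probabilistic bound $(1 - w_{\max}^u)^k$ is too weak when the fractional weight is spread thinly across many potential dominators, so the charging argument against the fractional cost has to be done carefully so as not to double-charge the same weight contribution to multiple arriving nodes that share dominators in $W_u$. I would address this by amortizing over the cumulative fractional cost and using the fact that the expected number of failures, weighted by $c_{\min}^u$, remains within a logarithmic factor of $\Phi_{\text{frac}}$.
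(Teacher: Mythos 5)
Your treatment of step (b) is correct and matches the paper's, and your one-line argument for step (ii) (each representative has the cheapest lease type, hence costs no more than the dominator it is paired with, so the representatives at most double the cost already accounted for) is if anything cleaner than the paper's expectation computation. The problem is step (c), where you correctly identify the obstacle but do not resolve it. Under your reading of the rounding --- a single threshold $\mu$ shared by all dominators of $u$ --- the failure probability is $\Pr[\mu \geq \max_{(i,l,t') \in W_u} w_{(i,l,t')}] = (1-\max w)^k$, which can indeed be as large as $(1-1/|W_u|)^{O(\log n)} \approx 1$ when the weight is spread thinly, and the amortized charging argument you gesture at is never carried out; it is not clear it can be made to work, since a single backup purchase of cost $c_{\min}^u$ need not be covered by $O(\log n)$ times the fractional weight newly accumulated on $W_u$.

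The paper avoids this entirely by (implicitly, following Alon et al.'s online set cover rounding) giving each triplet its own $2\lceil \log(n+1)\rceil$ independent uniform thresholds, so that the purchase decisions are independent across triplets. The probability that no dominator of $v$ is bought then factors as
\begin{equation*}
\prod_{(i,l,t) \in W_v} \bigl(1 - w_{(i,l,t)}\bigr)^{2\lceil \log(n+1)\rceil} \leq e^{-2\lceil \log(n+1)\rceil \sum_{(i,l,t) \in W_v} w_{(i,l,t)}} \leq \frac{1}{n^2},
\end{equation*}
using the loop invariant $\sum_{(i,l,t) \in W_v} w_{(i,l,t)} \geq 1$. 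The backup cost is then charged not to the fractional solution but directly to $Opt$: the cheapest lease cost is a lower bound on $Opt$, so summing over all $n$ nodes the expected step-(c) cost is at most $n \cdot n^{-2} \cdot Opt = Opt/n$, which is negligible. So the missing ingredients in your proof are (i) the independence across triplets that turns the failure probability into a product bounded by $e^{-\Omega(\log n)}$, and (ii) charging the rare backup purchases against $Opt$ rather than against the fractional cost. Without one of these two moves your step-(c) bound does not close.
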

\begin{proof}

Notice that the algorithm purchases triplets either in Step b, Step c, or Step ii of Phase 1. Let us first observe the expected cost of the integral solution constructed in Step b. Fix a triplet $(i, l, t) \in \mathcal{V}$ and $q: 1 \leq q \leq 2\left\lceil \log(n+1) \right\rceil$. Let $X_{(i, l, t), q}$ be the indicator variable of the event that  $(i, l, t)$ is bought by the algorithm. The expected cost of the integral solution is at most: 
\begin{equation}
\label{frac}
\sum_{(i, l, t) \in \mathcal{V}} \sum_{q = 1}^{2\left\lceil \log(n+1) \right\rceil} c_l \cdot Exp \left[X_{(i, l, t), q} \right]  \leq 2\left\lceil \log(n+1) \right\rceil \cdot \sum_{(i, l, t) \in \mathcal{V}} c_l \cdot w_{(i, l, t)} 
\end{equation}

Now we show that the expected cost of the integral solution constructed in Step c is negligible. Notice that the algorithm purchases a triplet of cheapest/smallest lease only if there is a node that is not dominated. Fix any such node $v$. For a single $1 \leq q \leq 2\left\lceil \log(n+1) \right\rceil$, the probability that $v$ is not dominated is at most:

$$ \prod_{(i, l, t) \in W_v}  (1 - w_{(i, l, t)}) \leq e^{-\sum_{(i, l, t)  \in W_v} w_{(i, l, t)}} \leq 1/e $$

The last inequality holds because $\sum_{(i, l, t)  \in W_v} w_{(i, l, t)}\geq 1$ is guaranteed by the algorithm. Hence, the probability that $v$ is not dominated for all $1 \leq q \leq 2\left\lceil \log(n+1) \right\rceil$ is at most $1/n^2$. Moreover, we have that the cost of the cheapest lease is a lower bound on the cost of the optimal offline solution $Opt$. Therefore, the expected cost for all $n$ nodes is at most $n \cdot 1/n^2 \cdot Opt$. It remains to measure the expected cost of the integral solution constructed in Step ii in which the algorithm purchases the representative set. We show that the expected cost of all representative triplets purchased is at most $\mathcal{O}(\log n)$ times the cost of the fractional solution. Note that all representative triplets are of cheapest lease type ($l = 1$). Fix a triplet $(u, 1, t) \in \mathcal{V}$ and $q: 1 \leq q \leq 2\left\lceil \log(n+1) \right\rceil$. Let $Y_{(u, 1, t), q}$ be the indicator variable of the event that $(u, 1, t)$ is chosen as a representative. This event occurs when the algorithm buys a triplet corresponding to one of its dominators. Let $W_u$ be the set of these triplets. The expected cost of representative $(u, 1, t)$ is then: 

\begin{equation}
\label{frac}
c_1 \cdot Exp \left[ Y_{(u, l, t), q} \right] \leq 2\left\lceil \log(n+1) \right\rceil \cdot c_1 \cdot \sum_{(i, l, t') \in W_u} w_{(i, l, t')} 
\end{equation}
Since $c_1$ is less that $c_l$ for all $l \in L$, we have that:   
\begin{equation}
\label{frac}
2\left\lceil \log(n+1) \right\rceil \cdot c_1 \cdot \sum_{(i, l, t') \in W_u} w_{(i, l, t')} \leq 2\left\lceil \log(n+1) \right\rceil \cdot \sum_{(i, l, t') \in W_u} c_l \cdot w_{(i, l, t')}
\end{equation}

The optimal solution must contain at least one triplet in $W_u$ - since it dominates the representative set. Thus, the same argument that led to Equation \ref{fracinc2} holds and we can conclude an overall expected cost of at most $2\left\lceil \log(n+1) \right\rceil$ times the cost of the fractional solution.   \qed
\end{proof}
Lemma \ref{lem1} and Lemma \ref{lem2} imply the cost of Algorithm \ref{alg:1} in Phase 1, as follows.

\begin{equation}
\label{phase1equation}
C_1 \leq \mathcal{O}(\log (|\mathcal{L}| \cdot n) \log n) \cdot Opt 
\end{equation} 

\paragraph{\bf Phase 2.} The algorithm for edge-weighted OSFL by Meyerson outputs a set of leased edges - forming a so-called \emph{Steiner forest}. Let $C_{St}$ be the cost of the leased edges. Since the algorithm purchases for each edge leased two corresponding triplets of the same cost, it pays exactly twice as much as $C_{St}$. Recall that Meyerson's randomized algorithm for edge-weighted OSFL has an $\mathcal{O}(\log n \log (|\mathcal{L}|))$-competitive ratio. Let $Opt_{St}$ be the cost of an optimal Steiner forest. We have that $C_{St} \leq \mathcal{O}(\log n \log (|\mathcal{L}|)) \cdot Opt_{St}$. Moreover, the optimal offline solution connects in each step $t$ the corresponding set of representatives at time $t$, since it dominates this set through a connected subgraph consisting of active $t$-triplets. This means that it forms a Steiner forest and hence $Opt_{St} \leq Opt$. Therefore the cost of Algorithm \ref{alg:1} in Phase 2 is at most:

\begin{equation}
\label{phase2equation}
C_2 \leq \mathcal{O}(\log n \log (|\mathcal{L}|)) \cdot Opt
\end{equation} 
\newline
Equations \ref{phase1equation} and \ref{phase2equation} ultimately lead to the following theorem. 

\begin{theorem}
There is a randomized $\mathcal{O}(\log ^2 n + \log |\mathcal{L}| \log n)$-competitive algorithm for OCDSL, where $n$ is the number of nodes in the input graph and $|\mathcal{L}|$ is the number of available lease types. 
\end{theorem}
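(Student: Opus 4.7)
The plan is essentially a consolidation of the two per-phase estimates already established: Equation~\ref{phase1equation} for Phase 1, obtained by chaining Lemma~\ref{lem1} and Lemma~\ref{lem2}, and Equation~\ref{phase2equation} for Phase 2, obtained from Meyerson's randomized online edge-weighted Steiner forest leasing algorithm. By linearity of expectation the algorithm's expected total cost is $C_1 + C_2$, so it suffices to put both estimates into a common asymptotic form and add them.

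First I would rewrite the Phase 1 bound. Lemma~\ref{lem1} gives $\mathcal{O}(\log(|\mathcal{L}|(\Delta+1))) \cdot Opt$ for the fractional solution, and since $\Delta \leq n$ this is $\mathcal{O}(\log(|\mathcal{L}| \cdot n)) \cdot Opt$; Lemma~\ref{lem2} then loses an extra $\mathcal{O}(\log n)$ factor when rounding to an integral solution (including the Step c and representative contributions, which were shown to be absorbed). Using $\log(|\mathcal{L}| \cdot n) = \log|\mathcal{L}| + \log n$ turns the product into $\mathcal{O}(\log^2 n + \log|\mathcal{L}| \log n) \cdot Opt$, which is already in the form claimed by the theorem.

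For Phase 2 the bound $C_2 \leq \mathcal{O}(\log n \log|\mathcal{L}|) \cdot Opt$ is immediate once one observes that the optimal offline OCDSL solution, restricted to any time step $t$, contains a connected active-at-$t$ subgraph dominating $D_t$ and in particular spanning the online-selected representative set $R_t$; hence it induces a feasible online Steiner forest instance of cost at most $Opt$, so $Opt_{St} \leq Opt$ and Meyerson's ratio can be invoked as a black box (paying only a factor of two for converting each leased edge into its two endpoint triplets of the same lease type). This $\mathcal{O}(\log n \log|\mathcal{L}|)$ contribution is absorbed into the same asymptotic class as the Phase 1 bound.

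Adding the two phase costs yields the expected total cost $\mathcal{O}(\log^2 n + \log|\mathcal{L}| \log n) \cdot Opt$, which is the desired competitive ratio. I do not expect any real obstacle: all randomization-based losses were already absorbed into the $\mathcal{O}(\log n)$ rounding factor inside Lemma~\ref{lem2} and into Meyerson's Steiner-forest algorithm invoked in Phase 2, so the theorem follows purely by combining and simplifying the two estimates, together with the trivial observation $\Delta \leq n$ used to replace the dependence on $\Delta$ by a dependence on $n$.
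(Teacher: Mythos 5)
Your proposal is correct and follows exactly the paper's own route: the theorem is obtained by summing the Phase~1 bound $C_1 \leq \mathcal{O}(\log(|\mathcal{L}|\cdot n)\log n)\cdot Opt$ (from Lemma~\ref{lem1}, Lemma~\ref{lem2}, and $\Delta \leq n$) with the Phase~2 bound $C_2 \leq \mathcal{O}(\log n \log|\mathcal{L}|)\cdot Opt$, and expanding $\log(|\mathcal{L}|\cdot n)\log n = \log^2 n + \log|\mathcal{L}|\log n$. No gap; your write-up is if anything slightly more explicit than the paper's one-line derivation.
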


\section{Online Dominating Set Leasing}
\label{sec:ODSL}

If we do not require the subgraph induced by the dominators to be connected, then Step i of Phase 1 outputs a feasible solution for this special case - the \emph{Online Dominating Set Leasing} problem (ODSL). This results in a randomized $\mathcal{O}(\log n \log (|\mathcal{L}| \cdot \Delta))$-competitive algorithm for ODSL and coincides with the result for the leasing variant of the \emph{Set Cover} problem~\cite{JOCO}. Moreover, one can get a deterministic $\mathcal{O}(|\mathcal{L}| \cdot \Delta)$-competitive algorithm for ODSL, using a simple primal-dual approach.

\begin{theorem}
There is a deterministic primal-dual algorithm for ODSL with $\mathcal{O}(|\mathcal{L}| \cdot \Delta)$-competitive ratio, where $\Delta$ is the maximum degree of the input graph and $|\mathcal{L}|$ is the number of available lease types. 
\end{theorem}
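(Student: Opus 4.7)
The plan is to cast ODSL as an integer covering linear program and apply the standard online primal--dual schema of Buchbinder and Naor. I would introduce a variable $x_{(i,l,t')}\in\{0,1\}$ for every triplet in $\mathcal{V}$, a coverage constraint $\sum_{(i,l,t')\in W_u,\; t'\le t < t'+d_l} x_{(i,l,t')} \ge 1$ for every demand $(u,t)$ (node $u$ arriving at time $t$), and objective $\min \sum c_l\, x_{(i,l,t')}$. The LP dual carries a variable $y_{(u,t)} \ge 0$ per demand, subject to $\sum_{(u,t)} y_{(u,t)} \le c_l$ for each triplet $(i,l,t')$, where the inner sum ranges only over demands that this triplet actually covers.

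The algorithm is the textbook one: whenever a node $u \in D_t$ has no active $t$-triplet dominating it, continuously raise $y_{(u,t)}$; at the first moment some triplet's dual constraint becomes tight, purchase that triplet (whose cost is now paid exactly by the accumulated dual on its constraint) and mark $u$ as dominated; repeat until $u$ is covered. Stopping the raise precisely at each tight event preserves dual feasibility, and a standard argument shows that a tight event is always reached before $y_{(u,t)}$ grows unboundedly, since any offline optimal solution already contains a covering $t$-triplet with slack to spare.

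For the analysis I would charge the algorithm cost back to the duals. By construction,
\[
C_{\mathrm{ALG}} \;=\; \sum_{(i,l,t')\text{ bought}} c_l \;=\; \sum_{(i,l,t')\text{ bought}}\ \sum_{(u,t)\,:\,(i,l,t')\text{ covers }(u,t)} y_{(u,t)}.
\]
The key combinatorial step is to count how many purchased triplets a single $y_{(u,t)}$ can contribute to. Any such triplet has the form $(i,l,t')$ with $i$ in the closed neighborhood of $u$ (at most $\Delta+1$ choices), and by the Interval Model of Definition~\ref{leaseconfig} a unique valid starting time $t'$ per lease type $l$ (so $|\mathcal{L}|$ further choices). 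Hence $C_{\mathrm{ALG}} \le (\Delta+1)\,|\mathcal{L}| \cdot \sum_{(u,t)} y_{(u,t)}$, and weak LP duality gives $\sum y_{(u,t)} \le Opt$, delivering the claimed $\mathcal{O}(|\mathcal{L}|\cdot \Delta)$ ratio.

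The main obstacle is not in the charging itself but in pinning down the dual constraint so that each $y_{(u,t)}$ is charged at most $(\Delta+1)\,|\mathcal{L}|$ times; this is exactly what the lease configuration buys us, since without it a single time $t$ could be covered by lease type $l$ starting at many different aligned moments and the per-dual multiplicity would blow up. A minor secondary point is to verify online feasibility of the dual raise, which, as noted, rests on the existence of an offline optimal cover providing the required slack.
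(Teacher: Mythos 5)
Your proposal is correct and follows essentially the same route as the paper: the same covering LP with one variable per triplet and one constraint per arriving demand, the same raise-duals-until-tight algorithm, and the same charging argument in which each dual variable is counted at most $(\Delta+1)\cdot|\mathcal{L}|$ times (the number of dominators of a node under the Interval Model), followed by weak duality. Your writeup is in fact more explicit than the paper's own sketch; the only quibble is that termination of the dual raise needs no appeal to the offline optimum --- it follows simply because $W_u$ is nonempty and each of its triplets imposes a finite cap $c_l$ on the accumulated dual.
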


We formulate the problem as a primal-dual program (see Figure \ref{fig:primaldual}). Each triplet $(i,l,t)$ is associated with a primal variable $X_{(i,l,t)}$ that indicates whether it is bought ($= 1$) or not ($= 0$). The main idea of the algorithm is as follows. Whenever a node $u \in \mathcal{D}$ appears, we increase the dual variables corresponding to its denominators until the primal constraint corresponding to $u$ is satisfied, without violating any of the dual constraints. The proof of the competitive ratio is based on the Weak Duality theorem, commonly used in the analysis of classical primal-dual algorithms. No constraints are violated here and so the proof can easily be deduced. Intuitively, the ratio $(|\mathcal{L}| \cdot \Delta)$ comes from the total number of dominators for any given node.

\begin{figure}
\begin{center}
\rule{\textwidth}{0.2pt}
$\underline{\min} \sum\limits_{(i,l,t)\in \mathcal{V}}X_{(i,l,t)}\cdot c_{l}$
\begin{align*}
\text{{\bf Subject to: }}
&\forall u \in \mathcal{D}: \sum\limits_{(i,l,t)\in W_u} X_{(i,l,t)} \ge 1 \\
&\forall (i,l,t)\in \mathcal{V}: X_{(i,l,t)} \in \{0, 1\} 
\end{align*}
\end{center}
\begin{center}
\rule{\textwidth}{0.2pt}
$\underline{\max} \sum\limits_{u \in \mathcal{D}} Y_{u}$ 
\begin{align*}
\text{{\bf Subject to: }}
&\forall (i,l,t)\in \mathcal{V}, \forall u \in \mathcal{D}: \sum\limits_{(i,l,t) \in W_u} Y_{u} \le c_{l}\\
&\forall u \in \mathcal{D}: Y_{u} \ge 0 
\end{align*}
\end{center}
\rule{\textwidth}{0.2pt}
\caption{Primal-dual formulation of ODSL}\label{figLP}
\label{fig:primaldual}
\end{figure}

\section{Open Problems}
\label{sec:conclusion}

We initiate with this work the online leasing study of connected dominating sets. Many variations of connected dominating sets capture interesting scenarios in wireless ad-hoc networks. It would be interesting to extend these problems to their leasing variants. 

Another next step would be to consider other graph classes such as geometric graphs and bounded-degree graphs. Moreover, our proposed algorithm for OCDSL does not extend to directed graphs because of Phase 2 in which we require an algorithm for the \emph{edge-weighted Online Steiner Forest Leasing} problem. The existing algorithm for the latter is based on embedding the input graph into a random tree such that each edge weight is increased by at most $\mathcal{O}(\log n)$, where $n$ is the number of nodes in the graph~\cite{PPP}. Hence, it does not seem to work for directed graphs. Therefore, it would be interesting to come up with another approach that solves the \emph{edge-weighted Online Steiner Forest Leasing} problem for directed graphs. This consequently extends our algorithm for OCDSL to directed graphs.     

Closing the gap between the current upper and lower bounds for OCDSL would be an interesting next step. Note that similar gaps exist for the leasing variants of the \emph{Set Cover} problem~\cite{JOCO} and the \emph{Steiner Forest} problem~\cite{PPP}. Existing algorithms for these three problems are all randomized. One may want to consider deterministic approaches for these problems. 

Nodes in our model are assumed to all have unit weight. In many applications, it might be useful to have different weights on the nodes. Our result for OCDSL is based on running an algorithm for the \emph{edge weighted Steiner Forest Leasing} problem. The latter cannot be transferred to the node-weighted variant since the problem becomes more general for node-weighted graphs (replace each edge of weight $w$ by a node of weight $w$). In fact, there have been many studies on online variants of the \emph{Steiner Forest} problem and its variants for node-weighted graphs. Naor \emph{et al.}~\cite{Naor:2011:ONS:2082752.2082953} have proposed a randomized $\mathcal{O}(\log ^7 k \log^3 n)$-competitive algorithm for the online node-weighted \emph{Steiner Forest} problem, where $k$ is the number of terminals and $n$ is the number of nodes in the input graph. Hajiaghayi \emph{et al.}~\cite{6686192} have later improved the latter by introducing a randomized $\mathcal{O}(\log ^2 k \log n)$-competitive algorithm. It is still not clear whether these results can be extended to the leasing variant of node-weighted \emph{Steiner Forest} problem. If the latter were possible and a poly-logarithmic competitive ratio is achieved, our proposed algorithm would yield a feasible solution for node-weighted graphs. However, a different analysis technique will be needed to argue about the competitiveness of the algorithm - since the argument for the representative set fails in case of weighted nodes.

\bibliographystyle{plain}

\bibliography{Bibliography}

\end{document}